\newcommand{\code}[1]{\texttt{#1}}
\newcommand{\Pairedwalk}{{\em Paired Walk}~}
 \newcommand{\qed}{\hspace*{\fill}\rule{6pt}{6pt}\vspace{.5\smallskipamount}}
 \newtheorem{theorem}{Theorem}[section]
 \newtheorem{lemma}{Lemma}[section]
 \newenvironment{proof} { \noindent \emph{Proof} : } { \qed }
\newcommand{\cS}{{\cal S}}
\newcommand{\A}{{\cal A}}
\newcommand{\cC}{{\cal C}}
\newcommand{\cD}{{\cal D}}
\begin{document}

\title{Tight Bounds for Black Hole Search with Scattered Agents in Synchronous Rings\thanks{Part of this work was done while E. Markou was visiting the LIF research laboratory in Marseille, France. Authors J.~Chalopin, S.~Das and A.~Labourel are partially supported by ANR projects SHAMAN and ECSPER.}}

\author[1]{J\'{e}r\'{e}mie Chalopin} 
\author[1]{Shantanu Das}
\author[1]{Arnaud Labourel}
\author[2]{Euripides Markou}

\affil[1]{LIF, CNRS \& Aix Marseille University, 39 rue Joliot Curie, 13453 Marseille,
France. {\small Email:~\{jeremie.chalopin,shantanu.das,arnaud.labourel\}@lif.univ-mrs.fr}}

\affil[2]{Department of Computer Science and Biomedical Informatics,
University of Central Greece, Lamia, Greece. {\small Email:~emarkou@ucg.gr}}




\date{}
\maketitle

\begin{abstract}
We study the problem of locating a particularly dangerous node, the so-called \emph{black hole} in a synchronous anonymous ring network with mobile agents. A black hole is a harmful stationary process residing in a node of the network and destroying
destroys all mobile agents visiting that node without leaving any trace. Unlike most previous research on the {\em black hole search} problem which employed a colocated team of agents, 
we consider the more challenging scenario when the agents are identical and initially scattered within the network. 
Moreover, we solve the problem with agents that have constant-sized memory and carry a constant number of identical tokens, which can be placed at nodes of the network. 
In contrast, the only known solutions for the case of scattered agents searching for a black hole, use stronger models where the agents have non-constant memory, can write messages in whiteboards located at nodes or are allowed to mark both the edges and nodes of the network with tokens. 
This paper solves the problem for ring networks containing a single black hole.  
We are interested in the minimum resources (number of agents and tokens) necessary for locating all links incident to the black hole. We present deterministic algorithms for ring topologies and 
provide matching lower and upper bounds for the number of agents and the number of tokens required for deterministic solutions to the black hole search problem, in oriented or unoriented rings, using movable or unmovable tokens. 

\bigskip

\noindent {\textbf{Keywords:}} {Distributed Algorithms, Fault Tolerance, Black Hole Search, Mobile Agents, Anonymous Networks, Identical tokens, Finite State Automata}
\end{abstract}

\section{Introduction}

\subsection{Overview}

We consider the problem of exploration in unsafe networks which contain malicious hosts of a highly harmful nature, called {\em black holes}. A black hole is a node which contains a stationary process destroying all mobile agents visiting this node, without leaving any trace~\cite{dfps07}. 
In the {\em Black Hole Search} (BHS) problem the goal for a team of agents is to locate the black hole within finite time, with the additional constraint that at least one of the agents must remain alive. 
In particular, at least one agent must survive and the surviving agents must have located (or marked) all edges leading to the black hole. 
It is usually assumed that all the agents start from the same location and have distinct identities. In this paper, we do not make such an assumption and study the problem for identical agents starting from distinct locations within the network. We focus on minimizing the resources required to find the black hole. 

The only way of locating a black hole is to have at least one agent visiting it. However, since any agent visiting a black hole is destroyed without leaving any trace, the location of the black hole must be deduced by some communication mechanism employed by the agents. Four such mechanisms have been proposed in the literature: a) the {\em whiteboard} model in which there is a whiteboard at each node of the network where the agents can leave messages, b) the {\em `pure' token} model where the agents carry tokens which they can leave at nodes,  c) the {\em `enhanced' token} model in which the agents can leave tokens at nodes or edges, and d) the time-out mechanism (only for synchronous networks) in which one agent explores a new node while another waits for it at a safe node. 

The most powerful inter-agent communication mechanism is having whiteboards at all nodes. Since access to a whiteboard is provided in mutual exclusion, this model could also provide the agents a symmetry-breaking mechanism: If the agents start at the same node, they can get distinct identities and then the distinct agents can assign different labels to all nodes. Hence in this model, if the agents are initially co-located, both the agents and the nodes can be assumed to be non-anonymous without any loss of generality. 
The BHS problem has been studied using whiteboards in asynchronous networks, with the objective of minimizing the number of agents required to locate the black hole. Note that in asynchronous networks, it is not possible to answer the question of whether or not a black hole exists in the network, since there is no bound on the time taken by an agent to traverse an edge.
Assuming the existence of (exactly one) black hole, the minimum sized team of co-located agents that can locate the black hole depends on the maximum degree $\Delta$ of a node in the network (unless the agents have a complete map of the network). In any case, the prior knowledge of the network size is essential to locate the black hole in finite time.

In the case of synchronous networks two co-located distinct agents can discover one black hole in any graph by using the time-out mechanism, without the need of whiteboards or tokens. 
Furthermore it is possible to detect whether a black hole actually exists or not in the network. Hence, with co-located distinct agents, the issue is not the feasibility but the time efficiency of black hole search (see \cite{ckr06,ckr10,ckmp06,ckmp07,kmrs07,kmrs08,knp09} for example). However when the agents are scattered in the network (as in our case), the time-out mechanism is not sufficient to solve the problem anymore.

Most of the previous results on black hole search used agents whose memory is at least logarithmic in the size of the network. This means that these algorithms are not scalable to networks of arbitrary size. This paper considers agents modeled as finite automata, i.e., having a constant number of states. This means that these agents can not remember or count the nodes of the network that they have explored. In this model, the agents cannot have prior knowledge of the size of the network. 
For synchronous ring networks of arbitrary size, containing exactly one black hole, we present deterministic algorithms for locating the black hole using scattered agents each having constant-sized memory.
We are interested in minimizing both the number of agents and the number of tokens required for solving the BHS problem.

We use the `pure' token model. While the whiteboard model is commonly used in unsafe networks, the token model has been mostly used for exploration of safe networks. 
Note that the `pure' token model can be implemented with $O(1)$-bit whiteboards (assuming that only a constant number of tokens may be placed on a node at the same time), while the `enhanced' token model can be implemented with $O(\log \Delta)$-bit whiteboards. In the previous results using the whiteboard model, the capacity of each whiteboard is always assumed to be of at least $\Omega (\log n)$ bits, where $n$ is the number of nodes of the network. Unlike the whiteboard model, we do not require any mutual exclusion mechanism at the nodes of the network.
We distinguish movable tokens (which can be picked up from a node and placed on another) from unmovable tokens (which can not be picked up once they are placed on a node). For both types of tokens, we provide matching upper and lower bounds on both the number of agents and the number of tokens per agent, required for solving the black hole search problem in synchronous rings. 
Although our algorithms require only a constant size memory for each agent, the impossibility results presented in this paper hold even for agents having unbounded memory.


\subsection{Related Works}

The exploration of an unknown graph by one or more mobile agents is a classical problem initially formulated in 1951 by Shannon \cite{sha51} and it has been extensively studied since then (e.g., see \cite{bs94, dp99, fgkp06}). 
In unsafe networks containing a single dangerous node (black hole), the problem of searching for it has been studied in the asynchronous model using whiteboards and given that all agents initially start at the same safe node (e.g.,~\cite{dfkprs06,dfps06,dfps07,dfs04}). It has also been studied using `enhanced' tokens in \cite{dfks06,dkss06,shi09} and in the `pure' token model in \cite{fis08}.  It has been proved that the problem can be solved with a minimal number of agents performing a polynomial number of moves. Notice that in an asynchronous network the number of the nodes of the network must be known to the agents otherwise the problem is unsolvable \cite{dfps07}. If the network topology is unknown, at least $\Delta +1$ agents are needed, where $\Delta$ is the maximum node degree in the graph \cite{dfps06}. It is usually assumed that the network is bi-connected and the existence of one black hole is common knowledge.

In asynchronous networks, with scattered agents (not initially located at the same node), the problem has been investigated for arbitrary topologies \cite{cds07,fkms09} in the whiteboard model while in the `enhanced' token model it has been studied for rings \cite{dss07,dss08} and for some interconnected networks \cite{shi09}.  

The issue of efficient black hole search has been studied in synchronous networks without whiteboards or tokens (only using the time-out mechanism) in \cite{ckr06,ckr10,ckmp06,ckmp07,kmrs07,kmrs08,knp09} under the condition that all distinct agents start at the same node. 

The problem has also been studied for co-located agents in asynchronous and synchronous directed graphs with whiteboards in \cite{cdkmp09,knp09}. In \cite{ckr10} they study how to locate and repair faults (weaker than black holes) using co-located agents in synchronous known networks with whiteboards and in \cite{gla09} they study the problem in asynchronous networks with whiteboards and co-located agents without the knowledge of incoming link. A different dangerous behavior is studied for co-located agents in \cite{km10}, where the authors consider a ring and assume black holes with Byzantine behavior, which do not always destroy a visiting agent.

In all previous papers (apart from \cite{fis08}) studying the Black Hole Search problem using tokens, the `enhanced' token model is used. The weakest `pure' token model has only been used in \cite{fis08} for co-located agents in asynchronous networks. In all previous solutions to the problem using tokens, the agents are assumed to have non-constant memory.

\subsection{Our Contributions}


Unlike previous studies on BHS, we consider the scenario of anonymous (i.e., identical) agents that are initially scattered in an anonymous ring.  
We focus our attention on very simple mobile agents. The agents have constant-size memory, they carry a constant number of identical tokens which can be placed at nodes and, (apart from using the tokens), they can communicate with other agents only when they meet at the same node. We consider four different scenarios depending on whether the tokens are movable or not, and whether the agents agree on a common orientation.  We present deterministic optimal algorithms and provide matching upper and lower bounds for the number of agents and the number of tokens required for solving BHS (See Table~\ref{tab:res} for a summary of results). Surprisingly, the agreement on the ring orientation does not influence the number of agents needed in the case of movable tokens but is important in the case of unmovable tokens. 

The lower bounds presented in this paper are very strong in the sense that they do not allow any trade-off between the number of agents and the number of tokens for solving the BHS problem. In particular we show that: \begin{itemize}
\item Any constant number of agents, even having unlimited memory, cannot solve the BHS problem with less tokens than depicted in all cases of Table~\ref{tab:res}.

\item Any number of agents less than that depicted in all cases of Table~\ref{tab:res} cannot solve the BHS problem even if the agents are equipped with any constant number of tokens and they have unlimited memory.
\end{itemize}
Meanwhile our algorithms match the lower bounds, are time-optimal and since they do not require any knowledge of the size of the ring or the number of agents, they work in any anonymous synchronous ring, for any number of anonymous identical agents (respecting the minimal requirements of Table~\ref{tab:res}). 


\begin{table}[ht]
\begin{center}
\begin{tabular}{| c | c | c | c | c |}\cline{3-4}
\multicolumn{2}{c}{}&\multicolumn{2}{|c|}{Resources necessary } \\ 
\multicolumn{2}{c}{}&\multicolumn{2}{|c|}{ and sufficient }  \\ 
\cline{1-5}
\textbf{Tokens are} & \textbf{Ring is} & \textbf{\# agents} &  \textbf{\# tokens} & \textbf{References in the paper}\\  
\hline
\multirow{2}{*}{Movable }&Oriented &\multirow{2}{*}{3} &\multirow{2}{*}{1}&\multirow{2}{*}{Theorem~\ref{impos-k-1}, \ref{th:impo-2oriented} and~\ref{lem:BHS-R-3-1m}}\\ \cline{2-2}
&Unoriented &&&\\ \hline
\multirow{2}{*}{Unmovable}&Oriented &4 &2&Theorem~\ref{impos-k-1}, \ref{th:impo-3oriented} and \ref{th:BHS-R-4-2u}\\  \cline{2-5}
&Unoriented &5 &2&Theorem~\ref{impos-k-1}, \ref{th:impo-4unroriented} and~\ref{th:BHS-R-5-2u}\\ \hline
\end{tabular}
\end{center}
\caption{Summary of results for BHS in synchronous rings \label{tab:res}}
\end{table}

\section{Our Model}

Our model consists of an anonymous, synchronous ring network with ${k \geq 2}$ identical mobile agents that are initially located at distinct nodes called \emph{homebases}. Each mobile agent owns a constant number $t$ of identical tokens which can be placed at any node visited by the agent. The tokens are indistinguishable. Any token or agent at a given node is visible to all agents on the same node, but not visible to agents on other nodes. The agents follow the same deterministic algorithm and begin execution at the same time and being in the same initial state. In all our protocols a node may contain at most two tokens at the same time.
At any node of the ring, the ports leading to the two incident edges are distinguishable and locally labelled (e.g. as $1$ and $2$) and an agent arriving at a node knows the port-label of the edge through which it arrived. In the special case of an oriented ring, the ports are consistently labelled as {\tt Left} and {\tt Right} (i.e., all ports going in the clockwise direction are labelled {\tt Left}). In an unoriented ring, the local port-labeling at a node is arbitrary and each agent in its first step chooses one direction as {\tt Left} and in every subsequent step, it translates the local port-labeling at a node into {\tt Left} and {\tt Right} according to its chosen orientation.  
 
In a single time unit, each mobile agent completes one \emph{step} which consists of the \emph{Look}, \emph{Compute} and \emph{Move} stages (in this order). During the Look stage, an agent obtains information about the configuration 
of the current node (i.e., agents, tokens present at the node) and its own configuration 
(i.e., the port through which it arrived and the number of tokens it carries). During the Compute stage, an agent can perform any number of computations (i.e., computations are instantaneous in our model). During the Move stage, the agent may put or pick up a token at the current node and then either move to an adjacent node or remain at the current node. Since the agents are synchronous they perform each stage of each step at the same time. 
We call a token {\em movable} if it can be put on a node and picked up later by any mobile agent visiting the node. Otherwise we call the token {\em unmovable} in the sense that, once released, it can occupy only the node where it was released. 

Formally we consider a mobile agent as a finite Moore automaton 
$\A=(\cS,S_0,\Sigma,\Lambda,\delta,\phi)$, where $\cS$ is a set of $\sigma \geq 2$ states among which there is a specified state $S_0$ called the {\em initial} state; $\Sigma\subseteq\cD\times \cC_v\times\cC_{A}$ is the set of possible configurations an agent can see when it enters a node; $\Lambda\subseteq \cD\times\{\tt put, pick, no\mbox{ }action\}$ is the set of possible actions by the agent; $\delta: \cS\times \Sigma\rightarrow\cS$ is the transition function; and $\phi:\cS\rightarrow\Lambda$ is the output function. $\cD = \{\tt left, right, none\}$ is the set of possible directions 
through which the agent arrives at or leaves a node ($\tt none$ represents no move by the agent). $\cC_v= {\{ 0,1\}^{\sigma}} \times \{ 0,1,2 \}$ is the set of possible configurations at a node, consisting of a bit string that denotes for each possible state whether there is an agent in that state, and an integer that denotes the number of tokens at that node (in our protocols at most $2$ tokens reside at a node at any time). Finally, $\cC_{A}=\{ 1, 2\} \times \{ 0, 1 \}$ is the set of possible configurations of an agent, i.e., its orientation and whether it carries any tokens or not.


Notice that all computations by the agents are independent of the size $n$ of the network and the number $k$ of agents. The agents have no knowledge of $n$ or $k$. 
The agents only know the number of tokens they have. Since the agents are identical they face the same limitations on their knowledge of the network. 
There is exactly one black hole in the network. An agent can start from any node other than the black hole and no two agents are initially colocated\footnote{Since there is no symmetry breaking mechanism at a node, two agents starting at the same node and in the same state, would behave as a single (merged) agent.}.  Once an agent detects a link to the black hole, it marks the link permanently as dangerous (i.e., disables this link). 
We require that at the end of a black hole search scheme, all links incident to the black hole (and only those links) are marked dangerous and that there is at least one surviving agent. Note that our definition of a successful BHS scheme is slightly different from the original definition, since we consider finite state agents.
%
The time complexity of a BHS scheme is the number of time units needed for completion of the scheme, assuming the worst-case location of the black hole and the worst-case initial placement of the scattered agents.


\section{Impossibility Results}

\subsection{Oriented Rings}

We first show that when the tokens are unmovable, any constant number of agents need 
at least two tokens to solve the BHS problem.

\begin{theorem}\label{impos-k-1}
For any constant $k$, there exists no algorithm that solves BHS in all oriented rings containing one black hole and $k$ or more scattered agents, when each agent is provided with only one unmovable token. The result holds even if the agents have unlimited memory.
\end{theorem}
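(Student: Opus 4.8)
The plan is to prove impossibility via an adversarial argument that constructs an indistinguishable configuration, exploiting the fact that a single unmovable token carries too little information to certify whether a suspected node is actually the black hole. The key observation is that with one unmovable token, once an agent deposits its token it can never retrieve it, so each agent can mark at most one node permanently, and the only "evidence" an agent accumulates about a node is the one-bit fact of whether a token sits on it. Since the agents are anonymous, finite (constant number $k$), and the tokens are identical and unmovable, I would argue that the total amount of distinguishing information deployable in the ring is bounded, and a careful adversary can place the black hole so that some agent is forced into a fatal move or the surviving agents can never correctly identify both links incident to the black hole.

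The core technique I would use is an indistinguishability/simulation argument. First I would set up a sufficiently large oriented ring (its size can be chosen adversarially, since the agents have no knowledge of $n$) and consider the behavior of the $k$ agents running the fixed deterministic algorithm $\A$. Since each agent is a finite automaton with a constant number of tokens, its local view at any node is an element of the finite set $\Sigma$, and its reaction is determined by $\delta$ and $\phi$. The plan is to track, step by step, the "certification front" — the set of nodes that have been safely explored and the tokens placed — and to show that at the critical moment when an agent must step onto an unexplored node adjacent to the current frontier, the adversary can arrange that this node is the black hole. The essential point is that a single unmovable token cannot simultaneously serve as a "this side is safe" marker and be recoverable for re-use to probe the opposite link, so the agents cannot safely pin down both incident links.

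The heart of the argument is to build two rings $R_1$ and $R_2$ (or two placements of the black hole within one ring) together with initial agent placements such that, up to the moment of the first agent death, the sequences of local configurations seen by every agent are identical in both instances, yet the black hole occupies different nodes. Formally I would establish, by induction on time, a configuration-preserving correspondence: at each time step the multiset of (state, port-of-arrival, local token count, co-located agents) tuples witnessed by the agents is the same in both instances. Because the agents are deterministic and memory does not help break this correspondence (the argument bounds information, not memory, which is why it holds even for unbounded memory), any agent that survives in $R_1$ makes exactly the same moves and markings in $R_2$. Consequently, whatever set of links the surviving agents mark as dangerous must be identical in both instances; but since the black hole sits at different links in $R_1$ and $R_2$, at least one instance has a link incorrectly marked (or a correct link left unmarked), contradicting the correctness of the BHS scheme.

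The main obstacle I expect is handling the token placements carefully in the inductive step: an unmovable token, once dropped, permanently alters the local configuration at its node, so the adversary must ensure the correspondence between the two instances respects these permanent marks and that no agent can use the \emph{absence} of a retrievable token to distinguish the two worlds. The delicate case is when an agent revisits a node where it (or another agent) previously left a token — I must verify that the token pattern built up in $R_1$ is faithfully mirrored in $R_2$, including around the two candidate black-hole locations, so that the single-bit-per-node information a token provides is genuinely insufficient to separate the instances. Closing this case, and arguing that with only one token an agent cannot both mark a safe boundary and probe further without risking an unrecoverable loss of its only marker, is where the real work lies; the counting over the constant number $k$ of agents and the finiteness of $\Sigma$ then makes the adversary's construction go through for a large enough ring.
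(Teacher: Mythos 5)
Your proposal correctly identifies the general family of techniques (an adversary/indistinguishability argument that bounds the information in views rather than in memory, which is why unlimited memory does not help), but it stops short of the construction that actually carries the proof, and the invariant you propose is not the right one. You aim to build two instances whose executions agree ``up to the moment of the first agent death''; that is far too weak, since the correct link-marking typically happens long after the first death, and it is precisely across deaths that the correspondence must be maintained. The paper's proof achieves this with a \emph{symmetry} construction that your sketch never pins down: first normalize the algorithm by running a single agent on an infinite line, so that it drops its unique unmovable token at a fixed offset $x$ after exploring a bounded radius $p$; then place the $k$ agents equidistantly (distance $2(p+1)$ apart) on a ring of size $2k(p+1)$ with the black hole at distance $p+1$ from its two neighbouring homebases. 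This guarantees that every agent deposits its token at the same time and the same relative position, that no two agents ever meet, and hence that \emph{all surviving agents remain in identical states at every step}. At the first marking time $\tau$, either two agents survive and mark two links $2(p+1)$ apart --- at most one of which can be incident to the black hole --- or the marker is the unique survivor, in which case one passes to a ring with $k+1$ agents and one more period, indistinguishable for $\tau$ steps, where two survivors mark two wrong-distance links. Your two-world sketch gives no concrete pair of instances, and with the black hole at genuinely different positions the deaths occur at different times and places, so the view-correspondence you assert by ``induction on time'' is exactly the claim that needs proving and is not established.

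A secondary misdirection: much of your discussion (that one unmovable token cannot both certify a safe boundary and be reused to probe the opposite link, and that a dead agent cannot encode how far it travelled) is the engine of the paper's \emph{other} lower bounds (Theorems~\ref{th:impo-3oriented} and~\ref{th:impo-4unroriented}), not of this one. Theorem~\ref{impos-k-1} does not need any such information-per-token accounting; it follows purely from the fact that a single token per agent is deposited once, symmetrically, and therefore never breaks the rotational symmetry of the configuration. Making your argument rigorous would essentially require rediscovering that symmetric placement, so as written the proposal has a genuine gap rather than being an alternative complete route.
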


\begin{proof}
Suppose there is a correct BHS algorithm that solves the problem with $k$ or more agents in rings of any size. If the algorithm does not require any agent to put down its token, such an algorithm should work even if every agent puts its token on its homebase in the first step. So without loss of generality, we assume that an agent puts down its token after executing a finite number of steps of the algorithm (unless it encounters some agents, some tokens or the black hole within this time). Now consider the behavior of this agent when placed on an infinite line (with no other agent). Suppose the agent puts down its token at a distance of $x$ (w.l.o.g. in the left direction) from its homebase. Further let $p$ be the maximum distance that the agent has travelled from its homebase (in either direction) before it puts down its token (thus, $x \leq p$). Now, consider a ring $R_1$ of size $n=2k(p+1)$ with one black hole and $k$ agents such that the agents are initially placed at a distance of $2(p+1)$ apart (see Figure~\ref{fig-impos-k-1}(a)). The black hole is located in the middle of a segment between two consecutive agents (i.e., it is at a distance $(p+1)$ from the closest agents). Since the agents start in the same state, they take the same actions and remain in identical states (until they encounter another agent, or a token or the black hole). As long as the agents do not travel any further than a distance $p$ from their homebase, they will not see anything different from the agent on the infinite line. Thus, each agent will put its token $x\leq p$ places to the left of its homebase. Each agent will do so at the same time and in the same state. During the rest of the algorithm, an agent can only move, observe the tokens and possibly mark some link as dangerous. 
Due to the correctness of the algorithm, at some time $\tau$, some agent will mark one link leading to the black hole as dangerous. Up to time $\tau$, all surviving agents behave the same. If there are more than one surviving agent, all of them are in the same state $\beta$ and still at a distance of $2(p+1)$ apart. Thus, if one such agent marks a link as dangerous, the next agent would mark a link at a distance of $2(p+1)$ away. So, one of the agents would have incorrectly marked a link---a contradiction to the correctness of the algorithm.

The remaining case to consider is when the agent that marks a link, is the last surviving agent. In this case, we can construct another ring $R_2$ of size $n=2(k+1)(p+1)$ with one black hole and $(k+1)$ agents initially placed the same distance apart as in the ring $R_1$ (see Figure~\ref{fig-impos-k-1}(b)). In an execution of the same algorithm on ring $R_2$, after exactly $\tau$ time steps, there will be two surviving agents both in the same state $\beta$ and at a distance of $2(p+1)$ from each other. Thus the two agents will mark as dangerous, two distinct links at a distance of $2(p+1)$ apart. Hence the algorithm fails for the ring $R_2$.  
\end{proof}

\begin{figure}[t]
\begin{center}
\scalebox{0.8}{\includegraphics{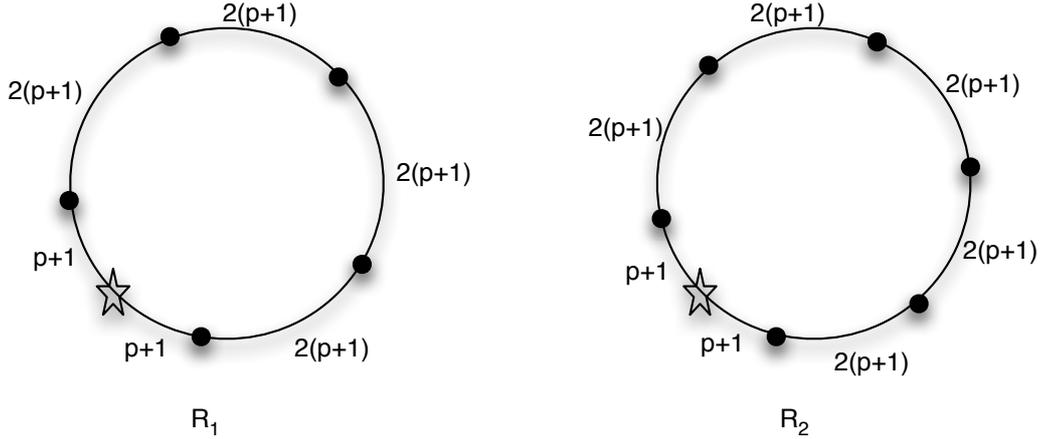}}
\end{center}
\vspace{-1cm}
\caption{\label{fig-impos-k-1}
(a) An oriented ring $R_1$ with $k$ agents and a black hole, (b) a larger oriented ring $R_2$ with $k+1$ agents and one black hole.}
\end{figure}

\medskip

We now derive some lower bounds on the number of agents necessary to solve the BHS problem.
The following result proves that at least one agent needs to be sacrificed for detecting each link leading to the black hole.

\begin{lemma}\label{lem:1}
During any execution of any BHS algorithm, if a link to the black hole is correctly marked, then at least one agent must have entered the black hole through this link.
\end{lemma}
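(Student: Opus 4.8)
The plan is to argue by contradiction: suppose a link $\ell$ leading to the black hole is correctly marked as dangerous at some time $\tau$, yet no agent has ever entered the black hole through $\ell$. I would then show that the configuration the marking agent observes is indistinguishable from a configuration of a different, safe ring in which the corresponding link does \emph{not} lead to a black hole, so marking it violates correctness. The key is that the only way information about the black hole's location can reach a surviving agent is through the actual disappearance of some agent into the black hole; if no agent is consumed via $\ell$, the agents have no way to certify that $\ell$ is the dangerous one rather than an ordinary link.

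First I would set up the indistinguishability argument carefully. Consider the execution $\mathcal{E}$ on the ring $R$ up to time $\tau$, and let $a$ be the agent that marks $\ell$ at time $\tau$. I would define the set of nodes and events that $a$ has actually "witnessed" — the nodes it has visited, the tokens it has seen placed or removed, and the states of agents it has met. The claim is that if no agent entered the black hole through $\ell$, then every event witnessed by $a$ (and transitively by every agent whose state could have influenced $a$) is consistent with $\ell$ being a normal edge between two safe nodes. Concretely, I would construct an alternative ring $R'$ in which the black hole is relocated (or replaced by a safe node on the far side of $\ell$), arrange matching homebases and agents, and show by induction on time that the two executions produce identical local observations for $a$ and every agent it interacts with, up through time $\tau$. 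Since $a$ follows a deterministic algorithm and sees identical inputs, it marks $\ell$ in $R'$ as well — but in $R'$ the edge $\ell$ is safe, contradicting correctness.

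The main obstacle will be controlling the flow of information among agents, because $a$ might mark $\ell$ based not on its own direct observation but on tokens left or states carried by other agents who \emph{did} enter the black hole (through a \emph{different} link). I would handle this by being precise about what "entering through this link" means: the lemma asserts that to mark link $\ell$ specifically, the witnessing evidence must ultimately trace back to an agent lost through $\ell$ itself, not merely to some agent lost anywhere. So the induction must track, for each piece of information reaching $a$, which missing-agent event generated it, and argue that consistency with an $\ell$-safe ring is preserved unless that event is precisely a disappearance across $\ell$. Keeping the bookkeeping of the "causal cone" of $a$'s decision clean — and ruling out that an agent lost through the \emph{other} link incident to the black hole could license marking $\ell$ — is where the real care is needed.

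Finally I would close by noting that this lemma dovetails with Theorem~\ref{impos-k-1}'s style of reasoning: both rely on the principle that finite-memory scattered agents cannot distinguish a genuine black-hole link from a safe link unless the distinction is forced by an actual loss of an agent across that very link. I expect the formal write-up to proceed by a single clean contradiction together with one inductive indistinguishability claim, rather than any quantitative estimate.
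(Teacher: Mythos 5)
Your overall plan---contradiction via an indistinguishable alternative ring in which the marked link is safe---is exactly the strategy the paper uses, but the step you leave open is the entire proof, and the two constructions you float do not obviously work. ``Replacing the black hole by a safe node'' fails for the reason you yourself identify: agents that were destroyed through the \emph{other} link would now survive, the executions diverge, and you are forced into the heavy causal-cone bookkeeping you describe, which you do not actually carry out. ``Relocating'' the black hole within the same ring also perturbs the execution (agents approaching from the other side would die at a different time), so neither variant yields the clean inductive indistinguishability claim you are counting on.

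The paper's construction dissolves the obstacle entirely: keep the black hole and all homebases exactly where they are, and \emph{insert one extra safe vertex} between the black hole and the endpoint of the marked link $\ell$, obtaining a ring of size $n+1$. Since, by hypothesis, no agent ever traverses $\ell$ before it is marked, no agent ever visits the new vertex; every agent that dies through the other link still dies at the same time in the same way; and since the agents do not know $n$, every agent sees literally the same sequence of observations in both rings up to the marking time. Hence the same link is marked in the larger ring, where it now leads to the inserted safe vertex rather than to the black hole---an immediate contradiction with no induction and no tracking of information flow. You should make this (or an equivalent) construction explicit; without it the proposal has a genuine gap at its central step.
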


\begin{proof}
Suppose for the sake of contradiction that there exists a correct BHS algorithm such that during any execution of this algorithm one link incident to the black hole is marked before any agent traverses this link. Assume without loss of generality that the link is on the left of the black hole. Consider the ring of size $n$ which leads to this execution. Now, add a vertex on the left of the black hole, obtaining a ring of size $n+1$, while keeping the same initial positions for the agents. The agents will behave as in the ring of size $n$ since they do not know the size of the ring and will see exactly the same configuration. Hence they will mark the left link of the new node as the link leading to the black hole. This contradicts with the correctness of the BHS algorithm.
\end{proof}

To solve the BHS problem in a ring, both links leading to the black hole need to be marked as dangerous. Thus, we immediately arrive at the following result.

\begin{theorem}\label{th:impo-2oriented}
Two mobile agents carrying any number of movable (or unmovable) tokens each, cannot solve the BHS problem in an oriented ring, even if the agents have unlimited memory.
\end{theorem}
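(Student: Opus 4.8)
The plan is to derive the impossibility directly from Lemma~\ref{lem:1} together with the two defining requirements of a correct BHS scheme, namely that \emph{both} links incident to the black hole be marked as dangerous and that \emph{at least one} agent survive. First I would fix any algorithm claimed to solve BHS with two agents, and consider an arbitrary execution on an oriented ring containing a single black hole. By the definition of success, at the end of this execution both links incident to the black hole are correctly marked. Applying Lemma~\ref{lem:1} separately to each of these two links, I would conclude that for each link some agent must have entered the black hole through that link.

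The key step is to argue that the two agents responsible for the two links are necessarily distinct. Since an agent that enters the black hole is destroyed without leaving any trace, it can never perform any further step and, in particular, cannot later enter through the other link. Hence the agent that supplies the evidence for marking the left link and the agent that supplies the evidence for the right link cannot be one and the same; at least two distinct agents must have been destroyed in the black hole. With only two agents available, this forces \emph{both} of them to perish, so no agent survives to the end of the execution, contradicting the survivor requirement. This contradiction establishes that two agents cannot solve BHS.

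I expect the distinctness of the two sacrificed agents to be the only point requiring care: it is what turns ``both links are marked'' into ``two separate agents entered the black hole'', and it rests precisely on the irreversible destruction of an agent upon entry. Everything else is immediate, and I would close by noting that the argument is insensitive to the token model (movable or unmovable), to the number of tokens, and to the agents' memory, since Lemma~\ref{lem:1} holds in full generality; this is exactly why the statement can afford to grant the agents unlimited memory and arbitrarily many tokens.
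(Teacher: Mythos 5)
Your proposal is correct and follows exactly the paper's route: the paper derives Theorem~\ref{th:impo-2oriented} as an immediate consequence of Lemma~\ref{lem:1} applied to both links incident to the black hole, and your write-up simply fills in the (easy) details that the two sacrificed agents must be distinct and hence no agent survives. Nothing is missing or different in substance.
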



\noindent When the tokens are unmovable, even three agents are not sufficient to solve BHS as shown below. 

\begin{theorem}\label{th:impo-3oriented}
Three mobile agents carrying a constant number of unmovable tokens each, cannot solve the BHS problem in an oriented ring, even if agents have unlimited memory.
\end{theorem}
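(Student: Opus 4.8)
The plan is to argue by contradiction: assume some correct algorithm solves BHS in every oriented ring with exactly three scattered agents, each carrying a constant number $t$ of unmovable tokens, allowing the agents unbounded memory. The first step is to pin down the forced ``death pattern'' of any successful execution. By Lemma~\ref{lem:1}, each of the two links incident to the black hole must be traversed by some agent, and since a single agent entering through one link cannot certify the other, these must be two \emph{distinct} agents; as a surviving agent is also required and only three are available, every successful execution sacrifices exactly two agents (one per link) and leaves a \emph{unique} survivor which must eventually mark \emph{both} links. Note that Theorem~\ref{impos-k-1} already forces $t\ge 2$, so the interesting regime is $t$ constant but at least two.

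Next I would set up a symmetric, sparse configuration to control the whole execution. Place the three agents at equal distance $D$ apart on an oriented ring of size $3D$, with $D$ chosen much larger than any constant depending on $t$ and on the behaviour of a single agent running the algorithm in isolation, and place the black hole inside one of the three gaps. Since the agents are identical, consistently oriented, and start in the same state, they proceed in lockstep---at every step each agent is in the same state and has dropped its tokens at the same offsets from its own homebase---until the first agent either reaches the black hole, meets another agent, or sees a foreign token. Because the agents are $D$ apart while all their tokens lie within a constant radius of their homebases, I would argue that for a suitable placement the first symmetry-breaking event is one agent entering the black hole, and that after this first death the two remaining agents stay in lockstep with each other until the second of them, entering through the opposite link, is also destroyed. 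This traces the forced scenario: one agent dies through the left link, a second through the right link, and the third is the survivor, with all token drops \emph{independent of the exact position of the black hole within its gap} up to the instant of death.

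The heart of the proof, and the step I expect to be the main obstacle, is to derive a contradiction from the survivor's limited information. The survivor must certify both links, hence must approach both ends of the path obtained by deleting the black hole from the ring; yet the only persistent evidence available to it consists of at most $3t$ unmovable tokens---a constant-size pattern---together with the meeting and time-out events it can witness. The goal is to exhibit two rings (of different sizes, or with the black hole shifted within its gap and the ``empty'' stretches the survivor crosses adjusted to compensate) on which the survivor's \emph{entire} sequence of observations is identical, while the correct pair of dangerous links differs; correctness would then force the survivor to mark, in at least one of the two rings, a link not incident to the black hole. The delicate points are that the survivor has unbounded memory, so it can count steps between landmarks and would detect naively inserted empty nodes---meaning the two instances must be matched so that \emph{every} inter-token and inter-agent distance it ever measures agrees---and that the leftover tokens of a dying agent depend on how far it explored before dying, so the two instances must be arranged so the dead agents leave the \emph{same} observable tokens. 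I would resolve both by exploiting that the usable certificate is of bounded size while the ring size is unbounded: for $D$ large enough there exist two admissible black-hole positions yielding the same bounded certificate in the survivor's view, giving the required indistinguishable pair. Together with Lemma~\ref{lem:1} and Theorem~\ref{impos-k-1}, this shows that three agents with a constant number of unmovable tokens cannot solve BHS in an oriented ring, so at least four are necessary.
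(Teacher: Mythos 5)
Your overall strategy belongs to the same family as the paper's (adversarial lockstep placement, Lemma~\ref{lem:1}, and an information-theoretic argument that a constant number of unmovable tokens cannot encode the black hole's position), but the step you yourself label ``the heart of the proof'' is never actually carried out, and it is the step that does all the work. You assert that ``for $D$ large enough there exist two admissible black-hole positions yielding the same bounded certificate in the survivor's view,'' yet this requires matching the survivor's \emph{entire} observation sequence --- every inter-token distance it can measure with its unbounded memory, every meeting, every time-out --- across two rings whose dangerous links differ. You correctly list the obstacles (the survivor can count; the victims' leftover tokens depend on how far they travelled before dying) but do not resolve them, so as written this is a proof plan with its central claim left as a conjecture. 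Two smaller slips: the unique survivor need not mark \emph{both} links (the first link may be marked by an agent that later dies through the second), and the claim that the two remaining agents stay in lockstep until the second death is unjustified, since after the first death they are not symmetrically placed with respect to the black hole and the dead agent's tokens.

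The paper closes exactly the gap you flag by arguing locally rather than globally, which also changes the shape of the contradiction. It places the three agents $4t$ apart on a ring of size $8t+x+y$ with the black hole at distances $x,y\le 2t$ from its two nearest homebases, so the agents provably stay within distance $2t$ of home and never meet nor see foreign tokens before the first death. The first victim $a$ enters, say, through the left link after travelling $x$; since $a$ has only $t$ tokens, it cannot leave a mark at each of the up to $2t$ newly explored nodes, so the adversary can choose $x\in[t+1,2t]$ so that the residual token configuration in $a$'s gap does not determine where the left link is. Hence locating the left link forces a \emph{second} sacrifice through that link, and Lemma~\ref{lem:1} independently forces a third sacrifice through the right link --- three deaths among three agents, contradicting the survivability requirement. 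This avoids ever having to construct two globally indistinguishable executions for a surviving agent, which is the part your proposal leaves open.
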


\begin{proof}
Suppose for the sake of contradiction that there exists an algorithm which solves the BHS problem for three agents each carrying a constant number $t$ of unmovable tokens. Let $x$ and $y$ be two integers chosen by the adversary, such that $1\leq x,y\leq 2t$. Suppose the three agents are initially placed on a ring of size $8t+x+y$ such that the distance between the first and second and  between the second and third agent is $4t$. The black hole is between the third and the first agent at a distance $x$ from one of them and at distance $y$ from the other (as in Figure~\ref{fig-impos-2-k}). By Lemma~\ref{lem:1}, at least one agent would fall into the black hole before any link to the black hole is identified. Consider the phase $P$ of the algorithm from the start until the first time an agent falls into the black hole. Let us call this agent $a$. Assume without loss of generality that agent $a$ enters the black hole by going right (i.e., after traveling a distance of $x$ from its homebase). First, notice that the agents never meet each-other during phase $P$ since they are always at a distance less than $2t$ from their homebases. 

Suppose for the sake of contradiction that after agent $a$ has vanished, the two surviving agents can identify the link used by agent $a$ to enter the black hole without sacrificing another agent. This is only possible if, whenever agent $a$ explores a new node to the right, it leaves a message encoding this fact 
and the only way to do this is by leaving another token. 
However, after $t$ explored nodes agent $a$ runs out of tokens. The adversary may then set $x$ to be any value between $t+1$ and $2t$. The remaining agents would not have enough information to determine the position of the black hole from the left (without the sacrifice of another agent). However, by Lemma~\ref{lem:1}, at least one agent must enter the black hole from the other link (on the right). Thus either one of the links to the black hole is never marked or there are no surviving agents.
\end{proof}

\begin{figure}[tb]
\begin{minipage}[t]{0.48\linewidth}
\begin{center}
\scalebox{0.8}{\includegraphics{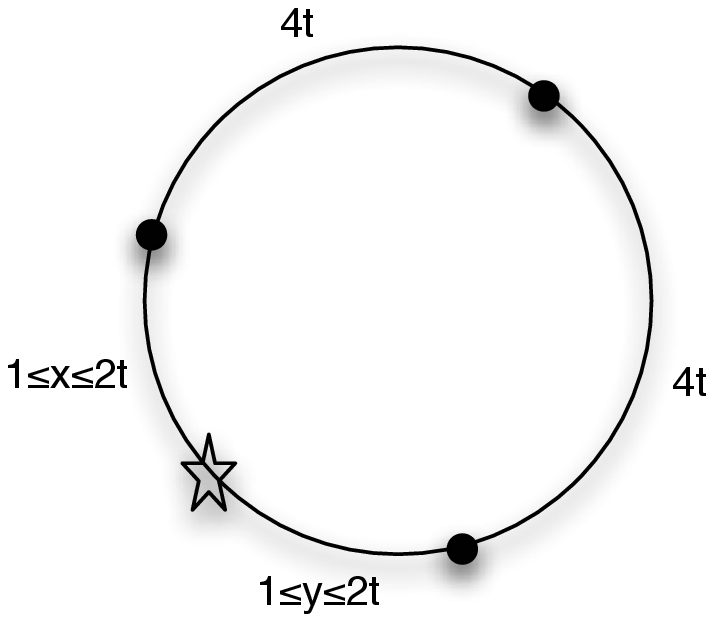}}
\end{center}
\vspace{-0.8cm}
\caption{\label{fig-impos-2-k}
Three agents with $t$ unmovable tokens each in an oriented ring.}
\end{minipage}
\hfill
\begin{minipage}[t]{0.48\linewidth}
\begin{center}
\scalebox{0.8}{\includegraphics{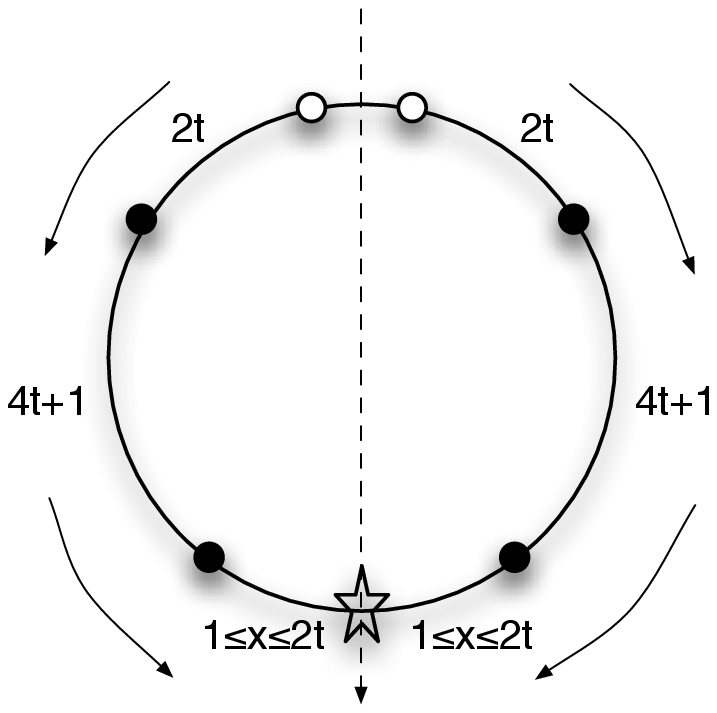}}
\end{center}
\vspace{-0.8cm}
\caption{\label{fig-impos-4-k}
Four agents with $t$ unmovable tokens each in an unoriented ring.}
\end{minipage}
\end{figure}

\subsection{Unoriented Rings}

In an unoriented ring, even four agents do not suffice to solve the BHS problem with unmovable tokens. In fact we show a stronger result that it is not even possible to identify just one of the links to the black hole, using four agents.

\begin{theorem}\label{th:impo-4unroriented}
In an unoriented ring, four agents carrying any constant number of unmovable tokens each, cannot correctly mark any link incident to the black hole, even when the agents have unlimited memory.
\end{theorem}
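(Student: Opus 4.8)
The plan is to exploit the absence of a common orientation by forcing the four agents to behave as two mirror-symmetric pairs throughout the entire execution, so that they are never able to break the left--right symmetry needed to single out one side of the black hole. First I would fix an involutive reflection $\rho$ of the ring whose axis passes through the black-hole node $B$ (and through the antipodal node or edge), so that $\rho$ swaps the two links incident to $B$. I place the four agents at four distinct off-axis nodes forming two $\rho$-symmetric pairs, each agent at distance greater than $t$ from $B$, and I choose the local port-labelling to be $\rho$-invariant (reflection-consistent). Since the agents are identical, deterministic and start in the same state at mirror-consistent local views, an induction on the time step shows that the global configuration---agent positions, states, chosen orientations and the (unmovable) tokens already dropped---stays invariant under $\rho$ at all times: two mirrored agents always read mirrored views, hence pick mirrored orientations in their first step and perform mirrored actions thereafter, and any meeting of mirrored agents necessarily occurs at an axis-fixed node, preserving the symmetry.

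Next I would combine this symmetry with Lemma~\ref{lem:1}. To correctly mark a link incident to $B$ some agent must have entered $B$ through that link; by the $\rho$-symmetry of the execution, an agent enters $B$ through the left link exactly when its mirror enters through the right link, so the agents are swallowed by the black hole in mirror pairs, one per link, at the same time step. Because every agent starts at distance more than $t$ from $B$ and carries only $t$ unmovable tokens, the trail a doomed agent can leave before perishing never reaches the neighbourhood of $B$: the nodes adjacent to $B$ remain token-free and look identical. Consequently, if all four perish simultaneously no agent survives and the scheme already fails; otherwise, after the first pair of deaths the surviving agents (again a mirror pair) see a token configuration that is symmetric about the axis and that carries no information pinning down the exact location of $B$.

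From here the contradiction follows by a stretching argument in the spirit of Lemma~\ref{lem:1}. Any marking performed by a surviving agent is, by symmetry, mirrored simultaneously by its partner, so the algorithm can only mark a link together with its $\rho$-image; to mark the true black-hole links it must therefore mark exactly the two edges at $B$, which requires a surviving agent to know how far $B$ is. I build a second ring $R_2$ from $R_1$ by symmetrically inserting equal numbers of empty nodes on the two edges incident to $B$; this keeps $\rho$ an automorphism, leaves the agent-reachable region and all token trails unchanged, and merely pushes $B$ farther away. The surviving agents cannot distinguish $R_1$ from $R_2$ until they reach the modified neighbourhood of $B$, so they mark links at the same relative positions in both rings---positions that coincide with the edges of $B$ in at most one of them. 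Thus in at least one of the two rings the algorithm marks a link not incident to $B$, violating correctness. The only way to avoid marking a wrong link is for a survivor to walk all the way to $B$ to locate it, but then by symmetry its mirror does the same and both are destroyed, leaving no surviving agent. In every case the algorithm fails to correctly mark any link incident to the black hole.

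The main obstacle I expect is the rigorous bookkeeping of the first step of the plan: proving that the $\rho$-invariant port-labelling really forces a perfectly mirror-symmetric execution even though, in an unoriented ring, each agent is free to choose its own notion of \texttt{Left}. I must argue that mirror-consistent initial views compel mirrored orientation choices and that no event---an agent meeting another, several agents colliding at a node, or an agent crossing the axis---can break the symmetry; the delicate cases are those where two mirrored agents interact, which I handle by showing such interactions are confined to the axis-fixed nodes. The secondary subtlety is to formalize, via the indistinguishability of $R_1$ and $R_2$ together with the $t$-token trail bound, that no surviving agent can localize $B$ without entering it.
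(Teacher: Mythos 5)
Your proposal follows essentially the same route as the paper's proof: a mirror-symmetric placement of the four agents about an axis through the black hole, adversarially chosen orientations that keep the whole execution reflection-invariant so that agents perish in symmetric pairs, and a token-exhaustion/indistinguishability argument (the paper reuses the argument of Theorem~\ref{th:impo-3oriented}, with the black hole at distance $x$, $t< x\le 2t$, from the nearest agents) showing the survivors cannot localize the black hole. The only real difference is the finishing move: you stretch the ring by inserting nodes symmetrically around $B$, whereas the paper observes directly that, since the axis of symmetry crosses an edge on the safe side, the two symmetric survivors can never meet and must both eventually enter the black hole without correctly marking any link---both endings are valid.
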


\begin{proof}
Suppose for the sake of contradiction that there exists an algorithm which marks one of the links incident to the black hole, using four agents each carrying $t$ unmovable tokens.  For some integer $x$, $1\leq x\leq 2t$, chosen by the adversary, suppose that the four agents and the black hole are initially placed as in Figure \ref{fig-impos-4-k}. The distance between two consecutive agents is 
$4t+1$ and the distance between the black hole and the closest agent on each side is $x$. Thus, the initial configuration is symmetric and the axis of symmetry crosses an edge and the black hole. The adversary can choose the orientations of the agents in such a way that the two agents closest to the black hole would fall into the black hole at the same time and before any agent meets another agent. Thus, the two surviving agents would continue to be in symmetric situation and they would take similar actions.
Using the same argument as in proof of Theorem \ref{th:impo-3oriented}, the information left by the vanished agent is not sufficient for one agent to correctly identify any link incident to the black hole. Due to the symmetry of the resulting configuration (and the fact that agents cannot meet on an edge), the two remaining agents can never meet and will always be in the same state until they both fall into the black hole (without marking any of the links incident to the black hole).
\end{proof}

\section{BHS Scheme with Movable Tokens}

We first consider the case when the agents have movable tokens. If each agent has a movable token it can perform a cautious walk~\cite{dfps07}.
The \code{Cautious-Walk} procedure consists of the following actions: Put the token at the current node, move one step in the specified direction, return to pick up the token, and again move one step in the specified direction (carrying the token). After each invocation of the Cautious Walk, the agent looks at the configuration of the current node\footnote{Recall that only the tokens put on the node are counted, not the tokens carried by the agent itself.} and decides whether to continue performing Cautious Walk.

We show that only three agents are sufficient to solve BHS, when they have one movable token each. Algorithm~\ref{algo:BHS-R-3-1m} achieves this, both for oriented and unoriented rings. 
The procedure \code{Mark-Link} permanently marks as dangerous the specified link. 

\begin{algorithm}
\tcc{ BHS in any ring using $k\geq 3$ agents having $1$ movable token each} 
\SetKwFunction{MarkLink}{Mark-Link}
\SetKwFunction{CautiousWalk}{CautiousWalk}
\SetKw{OR}{\textbf{or}}

\BlankLine
\lRepeat{current node has a token and no agent \OR next link is marked Dangerous}{\CautiousWalk(Left)\;}\;
\MarkLink{Left}\; 
\lRepeat{current node has a token and no agent \OR next link is marked Dangerous}{\CautiousWalk(Right)\;}\;
\MarkLink{Right}\; 

 \caption{BHS-Ring-1}
  \label{algo:BHS-R-3-1m}
\end{algorithm}

\begin{theorem}\label{lem:BHS-R-3-1m}
Algorithm \ref{algo:BHS-R-3-1m} solves the BHS problem in an unoriented ring with $k\geq 3$ agents having constant memory and one movable token each.
\end{theorem}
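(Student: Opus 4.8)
The plan is to reduce the correctness of Algorithm~\ref{algo:BHS-R-3-1m} to three ingredients: a structural property of the cautious walk, a global synchronization invariant, and a counting argument over the two links incident to the black hole. Fix the black hole $B$, let $u$ and $w$ be its two neighbours, and let $e_u,e_w$ be the two links incident to $B$. I will say that an agent is \emph{heading for $u$} while it performs \code{Cautious-Walk} in the direction for which $u$ is the last node before $B$, and symmetrically \emph{heading for $w$}. Note that in its left loop an agent is heading for exactly one of $u,w$ (according to its chosen orientation), and if it survives into the right loop it becomes headed for the other; in an unoriented ring the adversary fixes, per agent, which of the two the left loop targets.

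First I would record the basic cautious-walk property: by the definition of \code{Cautious-Walk} (deposit the token, step forward, step back to the token, step forward again), an agent can vanish only by stepping into $B$, and at that very step it has just laid its unique token on the neighbour of $B$ it came from. Consequently the only persistent tokens that ever sit on a node with \emph{no} agent are casualty tokens placed at $u$ or $w$, and by Lemma~\ref{lem:1} every link that is ever marked must have absorbed such a casualty.

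The key step is a synchronization invariant. Since all agents start simultaneously in the same state and \code{Cautious-Walk} has a fixed period, I would prove by induction on time that every agent still running the algorithm is in the same phase of its current cautious walk; the intermediate \code{Mark-Link} and the direction reversal between the two loops occur at the end of a walk and are instantaneous (Compute-stage) operations, so they preserve this alignment. From this I derive the crucial no-false-stop statement: whenever an agent evaluates its loop-termination test, every other agent is simultaneously carrying its own token, hence no transient token is visible; therefore ``current node has a token and no agent'' fires only on a casualty token at $u$ or $w$, and ``next link is marked Dangerous'' fires only when the agent sits at $u$ (resp.\ $w$) about to step into $B$. Thus an agent can halt and call \code{Mark-Link} only at $u$ or $w$, marking exactly $e_u$ or $e_w$, and never a wrong link. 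The same invariant shows that two oppositely moving agents merely cross, never halting or merging, since they are never jointly at a node at a test instant with a stray token.

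Finally I would count casualties and survivors. Among the executions heading for $u$, ordered by arrival at $u$, the first finds neither a token nor a marked link and so falls into $B$, depositing a token at $u$; every later one finds that token (or the now-marked $e_u$) and marks $e_u$ before reversing. Hence $e_u$ is marked iff at least two executions head for $u$, and exactly one agent is lost there; symmetrically for $w$. Writing $a,b$ for the numbers of agents whose left loop heads for $u$, resp.\ $w$ ($a+b=k$), a short bookkeeping gives that the number of executions heading for $u$ equals $a$ plus the left-loop survivors among the $b$ agents, which is $a+(b-1)$, $a$, or $b-1$ according to the split, hence always at least $k-1\ge 2$; the same holds for $w$. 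So both links are marked, exactly one casualty occurs per link, and precisely two agents are lost, leaving $k-2\ge 1$ survivors. The main obstacle I expect is exactly the no-false-stop statement: everything hinges on the synchronization invariant (including that \code{Mark-Link} and the reversal do not break phase alignment) and on the fact that the only persistent lone tokens live at $u$ and $w$; one must also verify the timing point that a reversed agent reaches the far neighbour of $B$ only after a casualty has fallen there, since it must traverse almost the whole ring, so that the second heading-for execution on each side genuinely finds the casualty token.
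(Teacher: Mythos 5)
Your proposal is correct and follows essentially the same route as the paper's (much terser) proof: the phase-synchronization invariant guaranteeing that a token seen at a loop-test instant can only be a casualty token at a neighbour of the black hole, followed by the observation that exactly one agent is lost per side and every later arrival marks the corresponding link before reversing. Your explicit counting of the executions heading for each neighbour ($\geq k-1\geq 2$ per side, hence $k-2\geq 1$ survivors) is a more detailed bookkeeping of what the paper compresses into ``exactly two agents would fall into the black hole and both links would eventually be marked.''
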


\begin{proof}
Notice that all agents start at the same time executing Procedure {\em CautiousWalk(dir)} and at each time they are at the same phase of this procedure. Since the only time an agent checks the number of tokens it sees is after completing an execution of the procedure, if the agent sees at least one token (not including the one it carries) then, either (i) there is another agent (i.e., they were traveling in opposite directions) or (ii) there is no other agent (which means that the token was left by an agent that disappeared). 
In case (i) the agent continues executing Procedure {\em CautiousWalk(dir)}. In case (ii) it is clear that the black hole resides at the next node in direction {\em dir}. In this case, the agent marks the edge to the black hole, reverses direction and repeats the process. Since the agents start from distinct locations at the same time, exactly two agents would fall into the black hole and both links to the black hole would be eventually marked.
\end{proof}

\section{BHS Scheme with Unmovable Tokens}

For agents having only unmovable tokens, we use the technique of \emph{Paired Walk} (called \emph{Probing} in \cite{ckr06}) for exploring new nodes. The procedure is executed by two co-located agents with different roles and the same orientation. One of the agents called the {\em leader} explores an unknown edge while the other agent, called {\em follower} waits for the leader. If the other endpoint of the edge is safe, the leader immediately returns to the previous node to inform the follower and then both move to this new node. On the other hand, if the leader does not return in two time steps, the follower knows that the next node is the black hole. 
(See Procedure \emph{Paired Walk}).

In order to use the \Pairedwalk technique, we need to gather two agents at the same node and then break the symmetry between them, so that distinct roles can be assigned to each of them. 
The basic idea of our algorithms is the following. We first identify the two homebases that are closest to the black hole (one on each side). These homebases are called \emph{gates}. The gates divide the ring into two segments: one segment contains the black hole (thus, is dangerous); the other segment contains all other homebases (and is safe). Initially all agents are in the safe part and an agent can move to the dangerous part only when it passes through the gate node. We ensure that any agent reaching a gate node, waits for a partner agent in order to perform the \Pairedwalk procedure.  
We now present two BHS algorithms, one for oriented rings and the other for unoriented rings.

\begin{procedure}
\SetKwFunction{WAIT}{Wait}

{ 
\uIf {isLeader} { 
Move one step in specified direction\; Move one step back\; Move one step in specified direction\;
}
\Else {
WAIT(2)\; \If{there is a leader} {Move one step in specified direction\; } 
}
}
\caption{PairedWalk( isLeader )}
  \label{proc:pairedCW}
\end{procedure}

\begin{algorithm}
\tcc{BHS in an Oriented Ring, using $k\geq 4$ agents having $2$ unmovable tokens each \\
Assumptions: All agents have the same initial state START.} 

\SetKwFunction{MarkLink}{Mark-Link}
\SetKwFunction{PairedCW}{PairedWalk}
\SetKw{OR}{\textbf{or}} 
\SetKwBlock{Begin}{}{}

{ 

\BlankLine
START: \Begin{ Place token at homebase; $State:=$CHECK-LEFT\; }

CHECK-LEFT: \Begin{ Move Left until the next node that contains a token; $State:=$GO-BACK \; }

GO-BACK: \Begin(\tcc*[f]{Go home to put the second token}) { 
Move Right until the next node that contains a token; Put the second token\; 
\lIf{there is a LEADER agent} {$State:=$LEFT-SEARCHER\; }
\lElse { \lIf{there is a ALONE agent or a WAITING agent} {$State:=$FOLLOWER\; }  }
\lElse{ $State:=$ALONE\; }
}

ALONE: \Begin (\tcc*[f]{Move alone to the node with single token}) { Move Left until a node that contains either only one token \OR two tokens and a GO-BACK agent \;
\lIf { there is a RIGHT-LEADER agent} {$State:=$RIGHT-FOLLOWER\; }
\lElse{ \lIf{there is a GO-BACK agent and no WAITING agent} {$State:=$LEADER\; }}
\lElse{ \lIf{there is a WAITING agent and no GO-BACK agent} {$State:=$FOLLOWER\; }}
\lElse{ \lIf {there are no other agents} { $State:=$WAITING\; }}
}

WAITING: \Begin (\tcc*[f]{Wait for a partner agent}) { 
Wait until other agents arrive at the current node\; 
\lIf{there is a LEADER agent} {$State:=$LEFT-SEARCHER\; }
\lElse{ \lIf{there is a GO-BACK agent or ALONE agent} {$State:=$LEADER\; }}
\lElse{ \lIf { there is a RIGHT-LEADER agent} {$State:=$RIGHT-FOLLOWER\; } }
}

LEADER: \Begin (\tcc*[f]{Perform Paired walk with Follower agent}) {     
\lWhile{the left link is not marked dangerous} { \PairedCW(1) in Left direction\; }  
\lIf{the left link is marked dangerous} {$State:=$RIGHT-SEARCHER\; }
} 

FOLLOWER: \Begin (\tcc*[f]{Perform Paired walk with Leader agent}) {  
\While{there is a LEADER agent and the left link is not marked BH-link} { \PairedCW(0) in Left direction\; 
\If{the LEADER did not return during the last step} {\MarkLink(Left); $State:=$RIGHT-LEADER; exit loop\; }
}
}

LEFT-SEARCHER: \Begin (\tcc*[f]{Move Left to become a Right-Follower}) {  
 \lWhile{not at a node with one token} {Move Left\;}
 Wait until there is a RIGHT-LEADER agent; $State:=$RIGHT-FOLLOWER\; 
}

RIGHT-SEARCHER: \Begin (\tcc*[f]{Move Right to become a Right-Follower}) {  \lWhile{not at a node with one token} {Move Right\;}  
\lIf{there is a RIGHT-LEADER agent} {$State:=$RIGHT-FOLLOWER\; } 
}

RIGHT-LEADER: \Begin (\tcc*[f]{Perform Paired walk in the other direction}) {  
\lWhile{not at a node with one token} {Move Right\;}  
Wait until there is another agent;  
\lWhile{true} { \PairedCW(1) in Right direction\; } 
} 

RIGHT-FOLLOWER: \Begin (\tcc*[f]{Perform Paired walk in the other direction}) {  
\While{there is a RIGHT-LEADER agent} {  \PairedCW(0) in Right direction\; 
\lIf{the RIGHT-LEADER did not return during the last step} {\MarkLink(Right); $State:=$HALT\; } 
}
}
}

 \caption{BHS-Ring-2}
  \label{algo:BHS-R-4-2u}
\end{algorithm}

\bigskip

\begin{algorithm}
\tcc{BHS in Unoriented Ring, using $k\geq 5$ agents having $2$ unmovable tokens each} 
%

\SetKwFunction{MarkLink}{Mark-Link}
\SetKwFunction{PairedCW}{PairedWalk}
\SetKw{OR}{\textbf{or}} 
\SetKw{AND}{\textbf{and}} 
\SetKwBlock{Begin}{}{}

{ 
\BlankLine
START: \Begin{ Place token at homebase; $State:=$CHECK-LEFT\; }

CHECK-LEFT: \Begin{ Move Left until the next node that contains a token; $State:=$CHECK-RIGHT \; }

CHECK-RIGHT: \Begin{ Move Right until the next node that contains a token; 
Move Right again until the next node that contains some token; 
$State:=$GO-BACK \; }

GO-BACK: \Begin(\tcc*[f]{Go home to put the second token}) { 
Move Left until the next node that contains one token; Put the second token\; 
\lIf { there is a RIGHT-LEADER agent} {$State:=$RIGHT-FOLLOWER\; }
\lElse{ \lIf{there is a LEADER agent} {$State:=$SEARCHER\; } }
\lElse{ \lIf{there is a ALONE agent or a WAITING agent} {$State:=$FOLLOWER\; } } 
\lElse{ $State:=$ALONE\; }
}

ALONE: \Begin (\tcc*[f]{Move alone to the node with single token}) { Move Left until a node that contains either only one token \OR two tokens and a GO-BACK agent \;
\lIf{there are no other agents} {$State:=$WAITING\; }
\lIf { there is a RIGHT-LEADER agent} {$State:=$RIGHT-FOLLOWER\; }
\lElse{ \lIf{there is a LEADER agent} {$State:=$SEARCHER\; } }
\lElse{ \If{there is a WAITING agent \AND no GO-BACK agent} {
\eIf{ there is another ALONE agent having same orientation as the WAITING agent} {$State:=$SEARCHER\;} {$State:=$FOLLOWER\; }
}}
\lElse{ \If{there is a GO-BACK agent \AND no WAITING agent} {
\eIf{ there is another ALONE agent having same orientation as the GO-BACK agent} {$State:=$SEARCHER\;}{$State:=$LEADER\; }
}}
\lElse{ $State:=$SEARCHER\;  }
}

WAITING: \Begin (\tcc*[f]{Wait for a partner agent}) {  
Wait until other agents arrive at the current node\; 
\lIf{ there is a RIGHT-LEADER agent} {$State:=$RIGHT-FOLLOWER\; }
\lElse{ \If{there is a GO-BACK or ALONE agent $a$ \AND no LEADER agent} {
\eIf{ there is another WAITING agent having same orientation as agent $a$} {$State:=$SEARCHER\;}{$State:=$LEADER\; } 
}}
\lElse {$State:=$SEARCHER\;}
}

}

 \caption{BHS-Ring-3}
  \label{algo:BHS-R-5-2u}
\end{algorithm}
\begin{algorithm}
\SetKwFunction{MarkLink}{Mark-Link}
\SetKwFunction{PairedCW}{PairedWalk}
\SetKwFunction{WAIT}{Wait}
\SetKw{OR}{\textbf{or}} 
\SetKw{AND}{\textbf{and}} 
\SetKwBlock{Begin}{}{}

{ 
\BlankLine
LEADER: \Begin{   
\lWhile{the left link is not marked dangerous} { \PairedCW(1) in Left direction\; }  
\lIf{the left link is marked dangerous} {$State:=$SEARCHER\; }
} 

FOLLOWER: \Begin{ 
Align orientation with the LEADER agent\;
\While{the left link is not marked BH-link} { 
	\PairedCW(0) in Left direction\; 
	\If{the LEADER did not return during the last step} { 
	     \MarkLink(Left); $State:=$RIGHT-LEADER; Exit Loop; 
	  }
}
}

SEARCHER: \Begin (\tcc*[f]{Go to the gate node to become RIGHT-FOLLOWER}){ 
\lWhile{not at a node with one token} {Move Right;} $State:=$RIGHT-FOLLOWER\; 
}

RIGHT-LEADER: \Begin(\tcc*[f]{Find a Follower and perform Paired Walk}){ 
\lWhile{not at a node with one token} {Move Right\;}  
\If {there is no other agent} {
\lRepeat{the current node contains one token}{Move Right; \WAIT(2)\;}}
\If {there is no other agent} {
\lRepeat{the current node contains one token}{Move Left\;}}
\lWhile{true} { \PairedCW(1) in Right direction\; } 
} 

RIGHT-FOLLOWER: \Begin{ 
Wait until there is a RIGHT-LEADER agent;
Align orientation with the RIGHT-LEADER agent\;
\While{there is a RIGHT-LEADER agent} {  
    \PairedCW(0) in Right direction\; 
    \If{the RIGHT-LEADER did not return during the last step} { 
        \MarkLink(Right); Exit loop\; } 
}
}

} 

 \caption{Algorithm BHS-Ring-3 (Continued)}
\end{algorithm}

\subsection{Oriented Rings}

In an oriented ring, all agents may move in the same direction (i.e., Left). During the first phase of the algorithm each agent places a token on its homebase, moves left until the next homebase (i.e., next node with a token) and then returns to its homebase to put down the second token. During this phase one agent will fall into the black hole and there will be a unique homebase with a single token (a ``gate" node) and the other homebases will have two tokens each. However, the agents may not complete this phase of the algorithm at the same time. Thus during the algorithm, there may be multiple homebases that contain a single token. Whenever an agent reaches a ``single token" node, it waits for a partner and then performs \Pairedwalk in the left direction. One of the agents of a pair (the leader) eventually falls into the black hole and the other agent (the follower) marks the edge leading to the black hole and returns to the gate node, waiting for another partner. When another agent arrives at this node, these two agents perform \Pairedwalk in the opposite direction to find the other incident link to the black hole. The algorithm sketched below ensures that exactly one leader agent falls into the black hole from each side while performing \emph{Paired Walk}. 
(For the full algorithm, please see Algorithm~\ref{algo:BHS-R-4-2u}) 

\bigskip 

\noindent \textbf{Algorithm BHS-Ring-2:} \\
\noindent During the algorithm, an agent $a$ performs the following actions.
\begin{enumerate} 
\item Agent $a$ puts a token and moves left until the next node with a token (state CHECK-LEFT) and then returns to its homebase $v$ (state GO-BACK) and puts its second token.
\item If there are no other agents at $v$, the agent moves left until it reaches a node containing exactly one token (state ALONE) and then waits for other agents arriving at this node (state WAITING).
 \item Otherwise, if there is a WAITING (or ALONE) agent $b$ at node $v$, the agents $a$ and $b$ form a (LEADER, FOLLOWER) pair.
\item If an ALONE agent meets a WAITING agent (and there are no other agents), they form a (LEADER, FOLLOWER) pair.
\item A LEADER agent performs \Pairedwalk until it falls into the black hole or it sees a link marked dangerous. In the latter case it moves to the gate node (state SEARCHER) and participates in \Pairedwalk in the other direction (state RIGHT-FOLLOWER).
\item A FOLLOWER agent performs  \Pairedwalk until the corresponding leader falls into the black hole or they see a link marked dangerous. In the former case, the agent (state RIGHT-LEADER) moves to the gate node and waits for a partner to start \Pairedwalk in the other direction.
\item When a WAITING agent $a$ meets a RIGHT-LEADER, agent $a$ becomes a RIGHT-FOLLOWER and participates in the \emph{Paired Walk}. 
\item The algorithm has some additional rules to ensure that no two LEADERs are created at the same node at the same time. No agent becomes a LEADER if there is already another LEADER at the same node (In this case, the agent become a SEARCHER and eventually a RIGHT-FOLLOWER when it reaches the gate node).
\end{enumerate}
\noindent When the algorithm BHS-Ring-2 is executed by four or more agents starting from distinct locations, 
the following properties hold: 
\begin{itemize}
\item Exactly one CHECK-LEFT agent falls into the black hole.
\item There is at least one LEADER agent and each LEADER has exactly one FOLLOWER.
\item No two LEADER agents are created at the same time on the same node and thus, two LEADERs can not reach the black hole at the same time.
\item There is exactly one RIGHT-LEADER agent and it falls into the black hole through the edge on the left side of the black hole.
\item An agent in any state other than CHECK-LEFT, LEADER, or RIGHT-LEADER, never enters the black hole.
\end{itemize}

\begin{theorem}\label{th:BHS-R-4-2u}
Algorithm BHS-Ring-2 correctly solves the black hole search problem in any oriented ring with $4$ or more agents having constant memory and carrying two unmovable tokens each.
\end{theorem}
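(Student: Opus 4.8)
The plan is to show that, in finite time, exactly the two links incident to the black hole are marked dangerous and at least one agent survives. I would take the five invariants stated just before the theorem as the backbone and verify them phase by phase---homebase setup, the leftward \emph{Paired Walk}, and the rightward \emph{Paired Walk}---and then assemble them into correctness and survival.

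First I would analyse the setup phase (states START, CHECK-LEFT, GO-BACK). Every homebase starts with one token, so during CHECK-LEFT an agent walks left to the next token, i.e.\ the next homebase; the unique agent sitting immediately clockwise of the black hole reaches the black hole before reaching any homebase and is destroyed, while every other agent returns (GO-BACK) and deposits its second token. This yields the first invariant and, more importantly, creates a single \emph{gate}: the homebase of the lost agent now holds one token and sits immediately clockwise of the black hole, so its left link is precisely the clockwise link of the black hole, whereas every surviving homebase holds two tokens.

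Next I would prove the symmetry-breaking heart of the argument, which gives the second and third invariants. The key lemma I would establish by induction on time is: \emph{at every step, no two agents in the same state occupy the same node}. The base case holds since all agents start in state START at distinct homebases; for the inductive step, two agents in a common state execute the same deterministic action, so if they were at distinct nodes they remain at distinct nodes, and consequently the only way two agents ever become co-located is with \emph{different} states (a moving ALONE agent reaching a stationary WAITING agent at the gate, or an ALONE agent meeting a GO-BACK agent arriving from the opposite direction). Given this, I would go through every meeting configuration permitted by the pseudocode and check that its state-dependent tie-break promotes exactly one agent to LEADER and the other to FOLLOWER, while any surplus agent is deflected toward the gate to serve later as a RIGHT-FOLLOWER. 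In particular two LEADERs are never created on the same node at the same time, and since $k\ge 4$ leaves at least three survivors, at least one LEADER--FOLLOWER pair is formed.

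Finally I would handle marking and survival. Each LEADER performs \emph{Paired Walk} left toward the black hole; every intermediate node is a safe former homebase from which the LEADER always returns, so a LEADER fails to return only on the step entering the black hole. By the third invariant the pairs do not reach the black hole simultaneously, so exactly one LEADER is lost and its waiting FOLLOWER marks the clockwise link; this FOLLOWER becomes the unique RIGHT-LEADER (fourth invariant), circles right, pairs with a RIGHT-FOLLOWER, and is destroyed entering from the counterclockwise side, whereupon its partner marks the counterclockwise link. Since \code{Mark-Link} is invoked only when a partner fails to return from a probe into an unexplored node---which occurs only at the black hole---no safe link is ever marked, so exactly the two black-hole links are marked. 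By the last invariant only CHECK-LEFT, LEADER and RIGHT-LEADER agents are ever lost, i.e.\ three agents in total, so with $k\ge 4$ at least one agent survives; as all walks are bounded by the ring length, the scheme terminates. The main obstacle is precisely the exhaustive case analysis behind the second and third invariants: because the agents are anonymous and finish the setup phase at adversarially different times, many combinations of ALONE, WAITING and GO-BACK agents can coincide at a node, and one must check that the tie-breaks resolve every one of them without ever spawning two LEADERs (which would send two agents into the black hole with no follower left to mark the link) and without stalling any link; it is this bookkeeping, rather than any single estimate, where the real work lies.
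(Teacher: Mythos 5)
Your proposal follows essentially the same route as the paper: establish the five properties listed before the theorem (one CHECK-LEFT casualty creating a unique gate, at least one LEADER--FOLLOWER pair, no two simultaneous co-located LEADERs, a unique RIGHT-LEADER, and safety of all other states), then combine them to conclude that exactly three agents are lost and both links get marked. Your added ``no two same-state agents on one node'' induction is a reasonable way to organize the symmetry-breaking case analysis that the paper leaves implicit, though its one-line justification is too quick (the action depends on the observed configuration, not only the state) and your claim that the gate's left link is itself a black-hole link is a harmless overstatement, since the Paired Walk, not the gate position, locates the actual incident edge.
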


\begin{proof}(Sketch) :
Due to the above properties we know that at most three agents fall into the black hole. Thus there is at least one surviving agent at the end of the algorithm. Since there is at least one $(LEADER, FOLLOWER)$ pair moving in the left direction, one of the links to the black hole is discovered and marked by this pair of agents. The FOLLOWER agent of this pair becomes the RIGHT-LEADER. At this point, only two agents have been destroyed (one LEADER and one CHECK-LEFT agent). Thus there exists at least one other agent and this agent will eventually reach the gate node, i.e., the only node having a single token (Note that an agent may wait only at node containing a single token). Thus, this agent will become the RIGHT-FOLLOWER and perform a \Pairedwalk with the RIGHT-LEADER to discover the other link to the black hole.
\end{proof}

 \subsection{Unoriented Rings} 
 
For unoriented rings, we need at least $5$ agents with two unmovable tokens each. The algorithm for unoriented rings with unmovable tokens is similar to the one for oriented rings, except that each agent chooses an orientation. When two agents meet and one has to follow the other, we assume that the state of the agent contains information about the orientation of the agent (i.e., the port at the current node considered by the agent to be \emph{Left}). Thus, when two agents meet at a node, one agent (e.g. the Follower) can orient itself according to the direction of the other agent (e.g. the Leader). 
\bigskip 

\noindent \textbf{Algorithm BHS-Ring-3:} \\
\noindent Each agent puts one token on its homebase, goes on \emph{its left} until it sees another token and then returns to its homebase. Now the agent goes on its right until it sees a token and then returns again to the homebase. The agent now puts its second token on its homebase. During this operation exactly two agents will fall into the black hole. Each surviving agent walks to its left until it sees a node $u$ with a single token. At this point the agent has to wait, since either there is a black hole ahead, or $u$ is the homebase of an agent $b$ that has not returned yet to put its second token.

It may happen that two agents arrive at node $u$ at the same time from opposite directions. In this case, both agents can wait until another agent arrives. Note that in this case, the ring is safe on both directions until the next homebase and thus, an agent $b$ (whose homebase is $u$) would arrive within a finite time. When agent $b$ arrives, only one of the waiting agents (the one having the same orientation as $b$) changes to state LEADER and pairs-up with agent $b$.
A similar case occurs when an agent $a$ is waiting and two agents (both ALONE) arrive from different directions. Among these two agents, the one having the same orientation as agent $a$ pairs up with agent $a$ and starts the \Pairedwalk procedure.

As before there can be multiple leader-follower pairs performing \Pairedwalk in different parts of the ring. Note that no two LEADERs can be created at the same node at the same time. Thus, two LEADERs may not enter into the black hole at the same time from the same direction. After the first LEADER enters the black hole from one direction, the corresponding FOLLOWER agent marks the link as a dangerous link and thus, no other agents enter the black hole from the same direction.

We ensure that each LEADER agent has exactly one FOLLOWER agent. When the LEADER agent falls into the black hole, the corresponding FOLLOWER agent becomes the RIGHT-LEADER. The objective of the RIGHT-LEADER is to discover the other link incident to the black hole. The RIGHT-LEADER agent moves to the other end of the ring until the node with one token. Since we assume there are at least five agents, there must be either an unpaired agent at one of the gates or, there must be another (LEADER, FOLLOWER) pair that has already detected and marked the other link leading to the black hole. If the RIGHT-LEADER does not find a RIGHT-FOLLOWER at the first gate, it performs a \emph{slow} walk to the other gate and returns again to the former gate.
During the slow walk, it moves at one-third the speed of any other agent (i.e., waits two steps after each move). This ensures that it will meet another agent in at least one of the two gates. These two
agents now starts the \Pairedwalk procedure in the other direction.

A complete description of the algorithm can be found in Algorithm~\ref{algo:BHS-R-5-2u}. 
The following properties can be verified:
\begin{enumerate}
\item Exactly two agents fall into the black hole before placing their second token.
\item There is at least one LEADER and each LEADER has a corresponding FOLLOWER.
\item There is either one or two RIGHT-LEADER agents (with opposite orientations).
\item At most one LEADER or RIGHT-LEADER enters the black hole from each direction.
\item An agent in any other state never enters the black hole after placing its second token.
\end{enumerate}

Due to the above properties, we know that at most $4$ agents may fall into the black hole. We now show that both links to the black hole are actually discovered and marked as dangerous, during the algorithm.

\begin{theorem}\label{th:BHS-R-5-2u}
Algorithm~BHS-Ring-3 correctly solves the black hole search problem in unoriented ring with 5 or more agents having constant memory and carrying two unmovable tokens each.
\end{theorem}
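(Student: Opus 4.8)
The plan is to mirror the correctness argument for the oriented case (Theorem~\ref{th:BHS-R-4-2u}), using the five properties listed just above the statement as the backbone, and then to account for the additional difficulties introduced by the agents' freedom of orientation. First I would invoke the properties to bound the number of casualties: by Property~1 exactly two agents are lost during the initial token-placement phase, and by Properties~3 and~4 at most one further agent is lost on each physical side of the black hole (at most one LEADER or RIGHT-LEADER entering from each direction). Together with Property~5, which guarantees that an agent in any other state is safe once it has placed its second token, this yields that at most four agents ever fall in. Since $k\geq 5$, at least one agent survives, which establishes the survival requirement of a correct BHS scheme.

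Next I would show that both links incident to the black hole are eventually marked, and that no safe link is ever marked. The latter follows directly from the \Pairedwalk mechanism: the procedure issues a \code{Mark-Link} call only when the LEADER (resp.\ RIGHT-LEADER) fails to return within two steps, which by construction happens precisely when the explored node is the black hole; hence every marking is correct, and by Lemma~\ref{lem:1} no link can be marked without an agent having traversed it. For the positive direction, Property~2 guarantees at least one (LEADER, FOLLOWER) pair; a straightforward induction on the rounds of the paired walk shows this pair eventually reaches the black hole, so its LEADER is destroyed while its FOLLOWER marks one incident link and becomes a RIGHT-LEADER. If, as allowed by Property~3, a symmetric pair of opposite orientation on the other side simultaneously produces a second RIGHT-LEADER, both links are marked by the two FOLLOWERs and we are finished.

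The remaining and principal case is when only one link has been discovered and a single RIGHT-LEADER must locate the second link; here the argument genuinely departs from the oriented case, since the RIGHT-LEADER does not know on which side the undiscovered gate lies. At the instant the first link is marked only three agents have been lost (two in the initial phase and one LEADER), so by the casualty count at least two agents remain, one of which is the FOLLOWER-turned-RIGHT-LEADER, leaving at least one further survivor that eventually reaches a single-token (gate) node and waits there. The crux is then the \emph{slow-walk} subroutine: moving at one-third speed while shuttling between the two gates, the RIGHT-LEADER cannot be perpetually avoided by a waiting agent. The key step, and the one I expect to be the main obstacle, is to verify formally that this speed differential forces a meeting at a gate within finite time regardless of the ring size and the adversarial placement of the surviving agent, while ruling out degenerate interleavings in which the two agents oscillate past one another or in which the intended RIGHT-FOLLOWER is instead consumed as a LEADER elsewhere. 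Once the meeting is guaranteed, the RIGHT-LEADER and its new RIGHT-FOLLOWER align orientations and run \Pairedwalk in the remaining direction, discovering and marking the second link; combined with the correctness of marking established above, this completes the proof.
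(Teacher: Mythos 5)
Your overall architecture matches the paper's: bound the casualties at four via the five listed properties, argue that every \code{Mark-Link} call is correct because the synchronous \Pairedwalk marks a link only when the leader fails to return in two steps, and then split into the case where two oppositely-oriented pairs find the two links independently versus the case where a single RIGHT-LEADER must recruit a partner for the second direction. However, you explicitly stop short of the one step that actually carries the theorem: you write that the meeting guarantee for the slow walk is ``the main obstacle'' to ``verify formally,'' and then proceed as if it were established. That is precisely the content the proof must supply, and the paper does supply it with a concrete timing argument: if $u$ and $w$ are the two gates and $d$ is the length of the safe segment between them, the RIGHT-LEADER's shuttle from one gate to the other at one-third speed takes $3d$ time units, while no surviving agent performs more than $3d$ edge traversals before it parks at a gate node and waits; hence when the RIGHT-LEADER returns to its starting gate, at least one of the two gates hosts a waiting agent. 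Without some such quantitative comparison of speeds, the ``cannot be perpetually avoided'' claim is an assertion, not a proof, and the degenerate interleavings you yourself list (oscillation past one another, or the intended partner being consumed as a LEADER on the other side) remain unexcluded. Note also that the paper resolves the latter worry not inside the slow-walk argument but by a separate case (multiple same-orientation LEADERs), where the surviving LEADERs become RIGHT-FOLLOWERs via the SEARCHER state and a different, simpler timing bound shows they reach the gate before the RIGHT-LEADER's second visit; your two-way case split silently merges this with the single-LEADER case, which is where the recruitment pool and the timing constants differ.

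A smaller point: your appeal to Lemma~\ref{lem:1} to argue that no safe link is marked is misdirected. That lemma states that a correctly marked link must have been traversed by a doomed agent; it does not by itself rule out incorrect markings. The soundness of the markings here comes entirely from synchrony: the leader's failure to return within two steps certifies that the probed node is the black hole. You do state this mechanism, so the conclusion stands, but the lemma citation adds nothing to it.
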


\begin{proof}(Sketch)
Let $v$ be the black hole and $u$ and $w$ be the gates. Assume nodes $u$, $v$ and $w$ appear in this order (say, in clockwise direction) with only non-homebase nodes between $u$ and $v$ and between $v$ and $w$. Let us denote by $d(u,v)$, $d(v,w)$ and $d(w,u)$ the distances between these nodes in the clockwise direction.
Due to Property~2 above we know that there is at least one LEADER. Suppose there are two LEADERs moving in opposite directions. In this case, the two LEADER agents will fall into the black hole through distinct links, while executing the \Pairedwalk procedure. Thus, both links to the black hole will be marked by the corresponding FOLLOWER agents and BHS is solved.
 
The second scenario to consider is when there are multiple LEADER agents but they all have the same orientation. In this case, no two LEADERs are created at the same node at the same time. Only the first LEADER to reach the black hole falls into it and thus there will be exactly one RIGHT-LEADER agent (the FOLLOWER of the agent that falls into a black hole becomes RIGHT-LEADER). All the other LEADERs become RIGHT-FOLLOWER and they go to one of the gate nodes (say $u$) and wait for the RIGHT-LEADER. We need to show that the RIGHT-LEADER would meet at least one RIGHT-FOLLOWER. There is a time difference of at most $d(w,u)$ between the RIGHT-LEADER and any RIGHT-FOLLOWER. The RIGHT-LEADER traverses a path of length $2(d(u,v)-1)+2d(w,u)$ before reaching the node $u$ for the second time. During this time any RIGHT-FOLLOWER agent would have certainly reached node $u$ and would be waiting for the RIGHT-LEADER. Thus, the RIGHT-LEADER meets at least one RIGHT-FOLLOWER and they perform the \Pairedwalk in the other direction to discover the other link to the black hole.

The third and only other possible scenario is when there is only one LEADER agent (and thus only one FOLLOWER). We know that one of the links to the black hole would be discovered by this (LEADER, FOLLOWER) pair. The FOLLOWER agent would eventually become RIGHT-LEADER. We need to show that the RIGHT-LEADER would meet at least one RIGHT-FOLLOWER.
Suppose (w.l.o.g.) that the RIGHT-LEADER was created on the same side of the black hole as node $u$. 
Since there are at least $5$ agents, there must be an unpaired agent waiting for a partner. The WAITING agent could be waiting either at node $u$ or at node $w$. Note that the RIGHT-LEADER moves from $u$ to $w$, waiting two steps after each edge traversal. This takes time $3d(u,w)$. Since no other agent makes more than $3d(u,w)$ edge traversals before it starts waiting at the gate, the RIGHT-LEADER is guaranteed to find at least one waiting agent either at node $w$ or at $u$ when it returns. Thus this agent becomes RIGHT-FOLLOWER and joins the \Pairedwalk procedure. Hence, the second link to the black hole would be discovered by this pair of agents.
%
\end{proof}

\section{Conclusions}
The results of this paper determine the minimum resources necessary for locating a black hole in synchronous ring networks. We presented algorithms that use the optimal number of agents and the optimal number of tokens per agent, while requiring only constant-size memory. Thus, all resources used by our algorithms are independent of the size of the network. Notice that all the algorithms presented in the paper have a time complexity of $O(n)$ steps, so, they are asymptotically optimal for BHS in a ring. 
The main question answered by the paper is how the limitation on the memory of the agents influences the resources required for solving BHS. We show that the constant memory limitation has no influence on the resource requirements since the (matching) lower bounds hold even if the agents have unlimited memory.
It would be interesting to investigate if similar tight results hold for BHS in other network topologies. We would also like to investigate the difference between `pure' and `enhanced' token model in terms of the minimum resources necessary for black hole search in higher degree networks.



\bibliographystyle{abbrv}
\bibliography{biblio-tot}

\end{document}